
\documentclass[conference]{IEEEtran}
\usepackage{eurosym}
\usepackage{amssymb}
\usepackage{amsmath}
\usepackage{amsfonts}
\usepackage{float}
\usepackage{graphics}
\usepackage[tight,footnotesize]{subfigure}
\usepackage{rotating}
\usepackage{algorithm}
\usepackage{algorithmic}
\usepackage{enumerate}
\usepackage{cite}
\usepackage{xcolor}

\usepackage{color,soul}
\usepackage[colorinlistoftodos]{todonotes}

\usepackage{lipsum}

\setcounter{MaxMatrixCols}{10}

\newcommand*{\QEDA}
{\hfill\ensuremath{\blacksquare}}
\DeclareMathSizes{10}{9}{7}{5}

\newtheorem{theorem}{Theorem}

\newtheorem{axiom}[theorem]{Axiom}

\newtheorem{conjecture}[theorem]{Conjecture}
\newtheorem{corollary}{Corollary}

\newtheorem{definition}[theorem]{Definition}
\newtheorem{exercise}[theorem]{Exercise}
\newtheorem{lemma}{Lemma}

\newtheorem{proposition}[theorem]{Proposition}
\newtheorem{remark}{Remark}


\typeout{TCILATEX Macros for Scientific Word and Scientific WorkPlace 5.5 <06 Oct 2005>.}
\typeout{NOTICE:  This macro file is NOT proprietary and may be
freely copied and distributed.}
\makeatletter

\ifx\pdfoutput\relax\let\pdfoutput=\undefined\fi
\newcount\msipdfoutput
\ifx\pdfoutput\undefined
\else
 \ifcase\pdfoutput
 \else
    \msipdfoutput=1
    \ifx\paperwidth\undefined
    \else
      \ifdim\paperheight=0pt\relax
      \else
        \pdfpageheight\paperheight
      \fi
      \ifdim\paperwidth=0pt\relax
      \else
        \pdfpagewidth\paperwidth
      \fi
    \fi
  \fi
\fi

%

%
\newcount\@hour\newcount\@minute\chardef\@x10\chardef\@xv60
\def\tcitime{
\def\@time{%
  \@minute\time\@hour\@minute\divide\@hour\@xv
  \ifnum\@hour<\@x 0\fi\the\@hour:%
  \multiply\@hour\@xv\advance\@minute-\@hour
  \ifnum\@minute<\@x 0\fi\the\@minute
  }}%


\def\x@hyperref#1#2#3{%
   \catcode`\~ = 12
   \catcode`\$ = 12
   \catcode`\_ = 12
   \catcode`\# = 12
   \catcode`\& = 12
   \catcode`\% = 12
   \y@hyperref{#1}{#2}{#3}%
}

\def\y@hyperref#1#2#3#4{%
   #2\ref{#4}#3
   \catcode`\~ = 13
   \catcode`\$ = 3
   \catcode`\_ = 8
   \catcode`\# = 6
   \catcode`\& = 4
   \catcode`\% = 14
}

\@ifundefined{hyperref}{\let\hyperref\x@hyperref}{}
\@ifundefined{msihyperref}{\let\msihyperref\x@hyperref}{}

\@ifundefined{qExtProgCall}{\def\qExtProgCall#1#2#3#4#5#6{\relax}}{}
%
%
%
%
\def\QCTOpt[#1]#2{%
  \def\QCTOptB{#1}
  \def\QCTOptA{#2}
}
\def\QCTNOpt#1{%
  \def\QCTOptA{#1}
  \let\QCTOptB\empty
}
\def\Qct{%
  \@ifnextchar[{%
    \QCTOpt}{\QCTNOpt}
}
\def\QCBOpt[#1]#2{%
  \def\QCBOptB{#1}%
  \def\QCBOptA{#2}%
}
\def\QCBNOpt#1{%
  \def\QCBOptA{#1}%
  \let\QCBOptB\empty
}
\def\Qcb{%
  \@ifnextchar[{%
    \QCBOpt}{\QCBNOpt}%
}
\def\PrepCapArgs{%
  \ifx\QCBOptA\empty
    \ifx\QCTOptA\empty
      {}%
    \else
      \ifx\QCTOptB\empty
        {\QCTOptA}%
      \else
        [\QCTOptB]{\QCTOptA}%
      \fi
    \fi
  \else
    \ifx\QCBOptA\empty
      {}%
    \else
      \ifx\QCBOptB\empty
        {\QCBOptA}%
      \else
        [\QCBOptB]{\QCBOptA}%
      \fi
    \fi
  \fi
}
\newcount\GRAPHICSTYPE
\GRAPHICSTYPE=\z@
\def\GRAPHICSPS#1{%
 \ifcase\GRAPHICSTYPE
   \special{ps: #1}%
 \or
   \special{language "PS", include "#1"}%
 \fi
}%
%
%
%

\def\graffile#1#2#3#4{%
    \bgroup
       \@inlabelfalse
       \leavevmode
       \@ifundefined{bbl@deactivate}{\def~{\string~}}{\activesoff}%
        \raise -#4 \BOXTHEFRAME{%
           \hbox to #2{\raise #3\hbox to #2{\null #1\hfil}}}%
    \egroup
}%
%
\def\draftbox#1#2#3#4{%
 \leavevmode\raise -#4 \hbox{%
  \frame{\rlap{\protect\tiny #1}\hbox to #2%
   {\vrule height#3 width\z@ depth\z@\hfil}%
  }%
 }%
}%
\newcount\@msidraft
\@msidraft=\z@
\let\nographics=\@msidraft
\newif\ifwasdraft
\wasdraftfalse

\def\GRAPHIC#1#2#3#4#5{%
   \ifnum\@msidraft=\@ne\draftbox{#2}{#3}{#4}{#5}%
   \else\graffile{#1}{#3}{#4}{#5}%
   \fi
}
\def\addtoLaTeXparams#1{%
    \edef\LaTeXparams{\LaTeXparams #1}}%
%

\newif\ifBoxFrame \BoxFramefalse
\newif\ifOverFrame \OverFramefalse
\newif\ifUnderFrame \UnderFramefalse

\def\BOXTHEFRAME#1{%
   \hbox{%
      \ifBoxFrame
         \frame{#1}%
      \else
         {#1}%
      \fi
   }%
}

\def\doFRAMEparams#1{\BoxFramefalse\OverFramefalse\UnderFramefalse\readFRAMEparams#1\end}%
\def\readFRAMEparams#1{%
 \ifx#1\end%
  \let\next=\relax
  \else
  \ifx#1i\dispkind=\z@\fi
  \ifx#1d\dispkind=\@ne\fi
  \ifx#1f\dispkind=\tw@\fi
  \ifx#1t\addtoLaTeXparams{t}\fi
  \ifx#1b\addtoLaTeXparams{b}\fi
  \ifx#1p\addtoLaTeXparams{p}\fi
  \ifx#1h\addtoLaTeXparams{h}\fi
  \ifx#1X\BoxFrametrue\fi
  \ifx#1O\OverFrametrue\fi
  \ifx#1U\UnderFrametrue\fi
  \ifx#1w
    \ifnum\@msidraft=1\wasdrafttrue\else\wasdraftfalse\fi
    \@msidraft=\@ne
  \fi
  \let\next=\readFRAMEparams
  \fi
 \next
 }%
%

\def\IFRAME#1#2#3#4#5#6{%
      \bgroup
      \let\QCTOptA\empty
      \let\QCTOptB\empty
      \let\QCBOptA\empty
      \let\QCBOptB\empty
      #6%
      \parindent=0pt
      \leftskip=0pt
      \rightskip=0pt
      \setbox0=\hbox{\QCBOptA}%
      \@tempdima=#1\relax
      \ifOverFrame
          \typeout{This is not implemented yet}%
          \show\HELP
      \else
         \ifdim\wd0>\@tempdima
            \advance\@tempdima by \@tempdima
            \ifdim\wd0 >\@tempdima
               \setbox1 =\vbox{%
                  \unskip\hbox to \@tempdima{\hfill\GRAPHIC{#5}{#4}{#1}{#2}{#3}\hfill}%
                  \unskip\hbox to \@tempdima{\parbox[b]{\@tempdima}{\QCBOptA}}%
               }%
               \wd1=\@tempdima
            \else
               \textwidth=\wd0
               \setbox1 =\vbox{%
                 \noindent\hbox to \wd0{\hfill\GRAPHIC{#5}{#4}{#1}{#2}{#3}\hfill}\\%
                 \noindent\hbox{\QCBOptA}%
               }%
               \wd1=\wd0
            \fi
         \else
            \ifdim\wd0>0pt
              \hsize=\@tempdima
              \setbox1=\vbox{%
                \unskip\GRAPHIC{#5}{#4}{#1}{#2}{0pt}%
                \break
                \unskip\hbox to \@tempdima{\hfill \QCBOptA\hfill}%
              }%
              \wd1=\@tempdima
           \else
              \hsize=\@tempdima
              \setbox1=\vbox{%
                \unskip\GRAPHIC{#5}{#4}{#1}{#2}{0pt}%
              }%
              \wd1=\@tempdima
           \fi
         \fi
         \@tempdimb=\ht1
         \advance\@tempdimb by -#2
         \advance\@tempdimb by #3
         \leavevmode
         \raise -\@tempdimb \hbox{\box1}%
      \fi
      \egroup%
}%
%
\def\DFRAME#1#2#3#4#5{%
  \vspace\topsep
  \hfil\break
  \bgroup
     \leftskip\@flushglue
     \rightskip\@flushglue
     \parindent\z@
     \parfillskip\z@skip
     \let\QCTOptA\empty
     \let\QCTOptB\empty
     \let\QCBOptA\empty
     \let\QCBOptB\empty
     \vbox\bgroup
        \ifOverFrame
           #5\QCTOptA\par
        \fi
        \GRAPHIC{#4}{#3}{#1}{#2}{\z@}%
        \ifUnderFrame
           \break#5\QCBOptA
        \fi
     \egroup
  \egroup
  \vspace\topsep
  \break
}%
%
\def\FFRAME#1#2#3#4#5#6#7{%
  \@ifundefined{floatstyle}
    {
     \begin{figure}[#1]%
    }
    {
     \ifx#1h
      \begin{figure}[H]%
     \else
      \begin{figure}[#1]%
     \fi
    }
  \let\QCTOptA\empty
  \let\QCTOptB\empty
  \let\QCBOptA\empty
  \let\QCBOptB\empty
  \ifOverFrame
    #4
    \ifx\QCTOptA\empty
    \else
      \ifx\QCTOptB\empty
        \caption{\QCTOptA}%
      \else
        \caption[\QCTOptB]{\QCTOptA}%
      \fi
    \fi
    \ifUnderFrame\else
      \label{#5}%
    \fi
  \else
    \UnderFrametrue%
  \fi
  \begin{center}\GRAPHIC{#7}{#6}{#2}{#3}{\z@}\end{center}%
  \ifUnderFrame
    #4
    \ifx\QCBOptA\empty
      \caption{}%
    \else
      \ifx\QCBOptB\empty
        \caption{\QCBOptA}%
      \else
        \caption[\QCBOptB]{\QCBOptA}%
      \fi
    \fi
    \label{#5}%
  \fi
  \end{figure}%
 }%
%
%
%
%
%
\newcount\dispkind%

\def\makeactives{
  \catcode`\"=\active
  \catcode`\;=\active
  \catcode`\:=\active
  \catcode`\'=\active
  \catcode`\~=\active
}
\bgroup
   \makeactives
   \gdef\activesoff{%
      \def"{\string"}%
      \def;{\string;}%
      \def:{\string:}%
      \def'{\string'}%
      \def~{\string~}%
    }
\egroup

\def\FRAME#1#2#3#4#5#6#7#8{%
 \bgroup
 \ifnum\@msidraft=\@ne
   \wasdrafttrue
 \else
   \wasdraftfalse%
 \fi
 \def\LaTeXparams{}%
 \dispkind=\z@
 \def\LaTeXparams{}%
 \doFRAMEparams{#1}%
 \ifnum\dispkind=\z@\IFRAME{#2}{#3}{#4}{#7}{#8}{#5}\else
  \ifnum\dispkind=\@ne\DFRAME{#2}{#3}{#7}{#8}{#5}\else
   \ifnum\dispkind=\tw@
    \edef\@tempa{\noexpand\FFRAME{\LaTeXparams}}%
    \@tempa{#2}{#3}{#5}{#6}{#7}{#8}%
    \fi
   \fi
  \fi
  \ifwasdraft\@msidraft=1\else\@msidraft=0\fi{}%
  \egroup
 }%
%

\def\TEXUX#1{"texux"}

%
%
%
%
%
%
%
%
%
%

%
\long\def\QQQ#1#2{%
     \long\expandafter\def\csname#1\endcsname{#2}}%
\@ifundefined{QTP}{\def\QTP#1{}}{}
\@ifundefined{QEXCLUDE}{\def\QEXCLUDE#1{}}{}
\@ifundefined{Qlb}{}{}
\@ifundefined{Qlt}{}{}
\long\def\QQA#1#2{}%
\def\QTR#1#2{{\csname#1\endcsname {#2}}}%

%
%
\def\EXPAND#1[#2]#3{}%
\def\NOEXPAND#1[#2]#3{}%
\def\LaTeXparent#1{}%
\def\ChildStyles#1{}%
\def\ChildDefaults#1{}%
\def\QTagDef#1#2#3{}%

\@ifundefined{correctchoice}{}{}
\@ifundefined{HTML}{\def\HTML#1{\relax}}{}
\@ifundefined{TCIIcon}{\def\TCIIcon#1#2#3#4{\relax}}{}
\if@compatibility
  \typeout{Not defining UNICODE  U or CustomNote commands for LaTeX 2.09.}
\else
  \providecommand{\UNICODE}[2][]{\protect\rule{.1in}{.1in}}
  \providecommand{\U}[1]{\protect\rule{.1in}{.1in}}
  
\fi

\@ifundefined{lambdabar}{
      
   }{}

%
\@ifundefined{StyleEditBeginDoc}{}{}
%
\def\QQfnmark#1{\footnotemark}

%
%
\@ifundefined{TCIMAKEINDEX}{}{\makeindex}%
%
\@ifundefined{abstract}{%
 \def\abstract{%
  \if@twocolumn
   \section*{Abstract (Not appropriate in this style!)}%
   \else \small
   \begin{center}{\bf Abstract\vspace{-.5em}\vspace{\z@}}\end{center}%
   \quotation
   \fi
  }%
 }{%
 }%
\@ifundefined{endabstract}{\def\endabstract
  {\if@twocolumn\else\endquotation\fi}}{}%
\@ifundefined{maketitle}{\def\maketitle#1{}}{}%
\@ifundefined{affiliation}{\def\affiliation#1{}}{}%
\@ifundefined{proof}{}{}%
\@ifundefined{endproof}{}{}%
\@ifundefined{newfield}{\def\newfield#1#2{}}{}%
\@ifundefined{chapter}{\def\chapter#1{\par(Chapter head:)#1\par }%
 \newcount\c@chapter}{}%
\@ifundefined{part}{\def\part#1{\par(Part head:)#1\par }}{}%
\@ifundefined{section}{\def\section#1{\par(Section head:)#1\par }}{}%
\@ifundefined{subsection}{\def\subsection#1%
 {\par(Subsection head:)#1\par }}{}%
\@ifundefined{subsubsection}{\def\subsubsection#1%
 {\par(Subsubsection head:)#1\par }}{}%
\@ifundefined{paragraph}{\def\paragraph#1%
 {\par(Subsubsubsection head:)#1\par }}{}%
\@ifundefined{subparagraph}{\def\subparagraph#1%
 {\par(Subsubsubsubsection head:)#1\par }}{}%
\@ifundefined{therefore}{}{}%
\@ifundefined{backepsilon}{}{}%
\@ifundefined{yen}{}{}%
\@ifundefined{registered}{%
   \def\registered{\relax\ifmmode{}\r@gistered
                    \else$\m@th\r@gistered$\fi}%
 \def\r@gistered{^{\ooalign
  {\hfil\raise.07ex\hbox{$\scriptstyle\rm\text{R}$}\hfil\crcr
  \mathhexbox20D}}}}{}%
\@ifundefined{Eth}{}{}%
\@ifundefined{eth}{}{}%
\@ifundefined{Thorn}{}{}%
\@ifundefined{thorn}{}{}%
%
\@ifundefined{degree}{}{}%
%
\newdimen\theight
\@ifundefined{Column}{\def\Column{%
 \vadjust{\setbox\z@=\hbox{\scriptsize\quad\quad tcol}%
  \theight=\ht\z@\advance\theight by \dp\z@\advance\theight by \lineskip
  \kern -\theight \vbox to \theight{%
   \rightline{\rlap{\box\z@}}%
   \vss
   }%
  }%
 }}{}%
\@ifundefined{qed}{\def\qed{%
 \ifhmode\unskip\nobreak\fi\ifmmode\ifinner\else\hskip5\p@\fi\fi
 \hbox{\hskip5\p@\vrule width4\p@ height6\p@ depth1.5\p@\hskip\p@}%
 }}{}%
\@ifundefined{cents}{}{}%
\@ifundefined{tciLaplace}{}{}%
\@ifundefined{tciFourier}{}{}%
\@ifundefined{textcurrency}{}{}%
\@ifundefined{texteuro}{}{}%
\@ifundefined{euro}{}{}%
\@ifundefined{textfranc}{}{}%
\@ifundefined{textlira}{}{}%
\@ifundefined{textpeseta}{}{}%
\@ifundefined{miss}{\def\miss{\hbox{\vrule height2\p@ width 2\p@ depth\z@}}}{}%
\@ifundefined{vvert}{}{}
\@ifundefined{tcol}{\def\tcol#1{{\baselineskip=6\p@ \vcenter{#1}} \Column}}{}%
\@ifundefined{dB}{}{}
\@ifundefined{mB}{}{}
\@ifundefined{nB}{}{}
\@ifundefined{note}{}{}%
\def\newfmtname{LaTeX2e}
%
\ifx\fmtname\newfmtname
  \DeclareOldFontCommand{\rm}{\normalfont\rmfamily}{\mathrm}
  \DeclareOldFontCommand{\sf}{\normalfont\sffamily}{\mathsf}
  \DeclareOldFontCommand{\tt}{\normalfont\ttfamily}{\mathtt}
  \DeclareOldFontCommand{\bf}{\normalfont\bfseries}{\mathbf}
  \DeclareOldFontCommand{\it}{\normalfont\itshape}{\mathit}
  \DeclareOldFontCommand{\sl}{\normalfont\slshape}{\@nomath\sl}
  \DeclareOldFontCommand{\sc}{\normalfont\scshape}{\@nomath\sc}
\fi

%

\def\alpha{{\Greekmath 010B}}%
\def\beta{{\Greekmath 010C}}%
\def\gamma{{\Greekmath 010D}}%
\def\delta{{\Greekmath 010E}}%
\def\epsilon{{\Greekmath 010F}}%
\def\zeta{{\Greekmath 0110}}%
\def\eta{{\Greekmath 0111}}%
\def\theta{{\Greekmath 0112}}%
\def\iota{{\Greekmath 0113}}%
\def\kappa{{\Greekmath 0114}}%
\def\lambda{{\Greekmath 0115}}%
\def\mu{{\Greekmath 0116}}%
\def\nu{{\Greekmath 0117}}%
\def\xi{{\Greekmath 0118}}%
\def\pi{{\Greekmath 0119}}%
\def\rho{{\Greekmath 011A}}%
\def\sigma{{\Greekmath 011B}}%
\def\tau{{\Greekmath 011C}}%
\def\upsilon{{\Greekmath 011D}}%
\def\phi{{\Greekmath 011E}}%
\def\chi{{\Greekmath 011F}}%
\def\psi{{\Greekmath 0120}}%
\def\omega{{\Greekmath 0121}}%
\def\varepsilon{{\Greekmath 0122}}%
\def\vartheta{{\Greekmath 0123}}%
\def\varpi{{\Greekmath 0124}}%
\def\varrho{{\Greekmath 0125}}%
\def\varsigma{{\Greekmath 0126}}%
\def\varphi{{\Greekmath 0127}}%

\def\nabla{{\Greekmath 0272}}
\def\FindBoldGroup{%
   {\setbox0=\hbox{$\mathbf{x\global\edef\theboldgroup{\the\mathgroup}}$}}%
}

\def\Greekmath#1#2#3#4{%
    \if@compatibility
        \ifnum\mathgroup=\symbold
           \mathchoice{\mbox{\boldmath$\displaystyle\mathchar"#1#2#3#4$}}%
                      {\mbox{\boldmath$\textstyle\mathchar"#1#2#3#4$}}%
                      {\mbox{\boldmath$\scriptstyle\mathchar"#1#2#3#4$}}%
                      {\mbox{\boldmath$\scriptscriptstyle\mathchar"#1#2#3#4$}}%
        \else
           \mathchar"#1#2#3#4%
        \fi
    \else
        \FindBoldGroup
        \ifnum\mathgroup=\theboldgroup 
           \mathchoice{\mbox{\boldmath$\displaystyle\mathchar"#1#2#3#4$}}%
                      {\mbox{\boldmath$\textstyle\mathchar"#1#2#3#4$}}%
                      {\mbox{\boldmath$\scriptstyle\mathchar"#1#2#3#4$}}%
                      {\mbox{\boldmath$\scriptscriptstyle\mathchar"#1#2#3#4$}}%
        \else
           \mathchar"#1#2#3#4%
        \fi
      \fi}

\newif\ifGreekBold  \GreekBoldfalse
\let\SAVEPBF=\pbf
\def\pbf{\GreekBoldtrue\SAVEPBF}%

\@ifundefined{theorem}{\newtheorem{theorem}{Theorem}}{}
\@ifundefined{lemma}{}{}
\@ifundefined{corollary}{}{}
\@ifundefined{conjecture}{}{}
\@ifundefined{proposition}{}{}
\@ifundefined{axiom}{}{}
\@ifundefined{remark}{}{}
\@ifundefined{example}{\newtheorem{example}{Example}}{}
\@ifundefined{exercise}{}{}
\@ifundefined{definition}{}{}

\@ifundefined{mathletters}{%
  \newcounter{equationnumber}
  \def\mathletters{%
     \addtocounter{equation}{1}
     \edef\@currentlabel{\theequation}%
     \setcounter{equationnumber}{\c@equation}
     \setcounter{equation}{0}%
     \edef\theequation{\@currentlabel\noexpand\alph{equation}}%
  }
  
}{}

\@ifundefined{BibTeX}{%
    \def\BibTeX{{\rm B\kern-.05em{\sc i\kern-.025em b}\kern-.08em
                 T\kern-.1667em\lower.7ex\hbox{E}\kern-.125emX}}}{}%
\@ifundefined{AmS}%
    {\def\AmS{{\protect\usefont{OMS}{cmsy}{m}{n}%
                A\kern-.1667em\lower.5ex\hbox{M}\kern-.125emS}}}{}%
\@ifundefined{AmSTeX}{}{}%
%

\def\@@eqncr{\let\@tempa\relax
    \ifcase\@eqcnt \def\@tempa{& & &}\or \def\@tempa{& &}%
      \else \def\@tempa{&}\fi
     \@tempa
     \if@eqnsw
        \iftag@
           \@taggnum
        \else
           \@eqnnum\stepcounter{equation}%
        \fi
     \fi
     \global\tag@false
     \global\@eqnswtrue
     \global\@eqcnt\z@\cr}

\def\TCItag{\@ifnextchar*{\@TCItagstar}{\@TCItag}}
\def\@TCItag#1{%
    \global\tag@true
    \global\def\@taggnum{(#1)}%
    \global\def\@currentlabel{#1}}
\def\@TCItagstar*#1{%
    \global\tag@true
    \global\def\@taggnum{#1}%
    \global\def\@currentlabel{#1}}
%
%
%
%
%
%
%
%
%
%
%
%
%
%
%
%
%
%
%

\def\tint{\msi@int\textstyle\int}%
\def\tiint{\msi@int\textstyle\iint}%
\def\tiiint{\msi@int\textstyle\iiint}%
\def\tiiiint{\msi@int\textstyle\iiiint}%
\def\tidotsint{\msi@int\textstyle\idotsint}%
\def\toint{\msi@int\textstyle\oint}%

%
%
%
%
%
%
%
%
%
%
%
%
%
%
%

\newtoks\temptoksa
\newtoks\temptoksb
\newtoks\temptoksc

\def\msi@int#1#2{%
 \def\@temp{{#1#2\the\temptoksc_{\the\temptoksa}^{\the\temptoksb}}}%
 \futurelet\@nextcs
 \@int
}

\def\@int{%
   \ifx\@nextcs\limits
      \typeout{Found limits}%
      \temptoksc={\limits}%
      \let\@next\@intgobble%
   \else\ifx\@nextcs\nolimits
      \typeout{Found nolimits}%
      \temptoksc={\nolimits}%
      \let\@next\@intgobble%
   \else
      \typeout{Did not find limits or no limits}%
      \temptoksc={}%
      \let\@next\msi@limits%
   \fi\fi
   \@next
}%

\def\@intgobble#1{%
   \typeout{arg is #1}%
   \msi@limits
}

\def\msi@limits{%
   \temptoksa={}%
   \temptoksb={}%
   \@ifnextchar_{\@limitsa}{\@limitsb}%
}

\def\@limitsa_#1{%
   \temptoksa={#1}%
   \@ifnextchar^{\@limitsc}{\@temp}%
}

\def\@limitsb{%
   \@ifnextchar^{\@limitsc}{\@temp}%
}

\def\@limitsc^#1{%
   \temptoksb={#1}%
   \@ifnextchar_{\@limitsd}{\@temp}%
}

\def\@limitsd_#1{%
   \temptoksa={#1}%
   \@temp
}

\def\dint{\msi@int\displaystyle\int}%
\def\diint{\msi@int\displaystyle\iint}%
\def\diiint{\msi@int\displaystyle\iiint}%
\def\diiiint{\msi@int\displaystyle\iiiint}%
\def\didotsint{\msi@int\displaystyle\idotsint}%
\def\doint{\msi@int\displaystyle\oint}%

\if@compatibility\else
  \RequirePackage{amsmath}
\fi

\def\ExitTCILatex{\makeatother }

\bgroup
\ifx\ds@amstex\relax
   \message{amstex already loaded}\aftergroup\ExitTCILatex
\else
   \@ifpackageloaded{amsmath}%
      {\if@compatibility\message{amsmath already loaded}\fi\aftergroup\ExitTCILatex}
      {}
   \@ifpackageloaded{amstex}%
      {\if@compatibility\message{amstex already loaded}\fi\aftergroup\ExitTCILatex}
      {}
   \@ifpackageloaded{amsgen}%
      {\if@compatibility\message{amsgen already loaded}\fi\aftergroup\ExitTCILatex}
      {}
\fi
\egroup


\typeout{TCILATEX defining AMS-like constructs in LaTeX 2.09 COMPATIBILITY MODE}
%
%
\let\DOTSI\relax
\def\RIfM@{\relax\ifmmode}%
\def\FN@{\futurelet\next}%
\newcount\intno@
\def\iint{\DOTSI\intno@\tw@\FN@\ints@}%
\def\iiint{\DOTSI\intno@\thr@@\FN@\ints@}%
\def\iiiint{\DOTSI\intno@4 \FN@\ints@}%
\def\idotsint{\DOTSI\intno@\z@\FN@\ints@}%
\def\ints@{\findlimits@\ints@@}%
\newif\iflimtoken@
\newif\iflimits@
\def\findlimits@{\limtoken@true\ifx\next\limits\limits@true
 \else\ifx\next\nolimits\limits@false\else
 \limtoken@false\ifx\ilimits@\nolimits\limits@false\else
 \ifinner\limits@false\else\limits@true\fi\fi\fi\fi}%
\def\multint@{\int\ifnum\intno@=\z@\intdots@                          
 \else\intkern@\fi                                                    
 \ifnum\intno@>\tw@\int\intkern@\fi                                   
 \ifnum\intno@>\thr@@\int\intkern@\fi                                 
 \int}
\def\multintlimits@{\intop\ifnum\intno@=\z@\intdots@\else\intkern@\fi
 \ifnum\intno@>\tw@\intop\intkern@\fi
 \ifnum\intno@>\thr@@\intop\intkern@\fi\intop}%
\def\intic@{%
    \mathchoice{\hskip.5em}{\hskip.4em}{\hskip.4em}{\hskip.4em}}%
\def\negintic@{\mathchoice
 {\hskip-.5em}{\hskip-.4em}{\hskip-.4em}{\hskip-.4em}}%
\def\ints@@{\iflimtoken@                                              
 \def\ints@@@{\iflimits@\negintic@
   \mathop{\intic@\multintlimits@}\limits                             
  \else\multint@\nolimits\fi                                          
  \eat@}
 \else                                                                
 \def\ints@@@{\iflimits@\negintic@
  \mathop{\intic@\multintlimits@}\limits\else
  \multint@\nolimits\fi}\fi\ints@@@}%
\def\intkern@{\mathchoice{\!\!\!}{\!\!}{\!\!}{\!\!}}%
\def\plaincdots@{\mathinner{\cdotp\cdotp\cdotp}}%
\def\intdots@{\mathchoice{\plaincdots@}%
 {{\cdotp}\mkern1.5mu{\cdotp}\mkern1.5mu{\cdotp}}%
 {{\cdotp}\mkern1mu{\cdotp}\mkern1mu{\cdotp}}%
 {{\cdotp}\mkern1mu{\cdotp}\mkern1mu{\cdotp}}}%
%
%
%
\def\RIfM@{\relax\protect\ifmmode}
\def\text{\RIfM@\expandafter\text@\else\expandafter\mbox\fi}
\let\nfss@text\text
\def\text@#1{\mathchoice
   {\textdef@\displaystyle\f@size{#1}}%
   {\textdef@\textstyle\tf@size{\firstchoice@false #1}}%
   {\textdef@\textstyle\sf@size{\firstchoice@false #1}}%
   {\textdef@\textstyle \ssf@size{\firstchoice@false #1}}%
   \glb@settings}

\def\textdef@#1#2#3{\hbox{{%
                    \everymath{#1}%
                    \let\f@size#2\selectfont
                    #3}}}
\newif\iffirstchoice@
\firstchoice@true
%
%
\def\Let@{\relax\iffalse{\fi\let\\=\cr\iffalse}\fi}%
\def\vspace@{\def\vspace##1{\crcr\noalign{\vskip##1\relax}}}%
\def\multilimits@{\bgroup\vspace@\Let@
 \baselineskip\fontdimen10 \scriptfont\tw@
 \advance\baselineskip\fontdimen12 \scriptfont\tw@
 \lineskip\thr@@\fontdimen8 \scriptfont\thr@@
 \lineskiplimit\lineskip
 \vbox\bgroup\ialign\bgroup\hfil$\m@th\scriptstyle{##}$\hfil\crcr}%
\def\Sb{_\multilimits@}%
\def\endSb{\crcr\egroup\egroup\egroup}%
\def\Sp{^\multilimits@}%

%
%
%
\newdimen\ex@
\ex@.2326ex
\def\rightarrowfill@#1{$#1\m@th\mathord-\mkern-6mu\cleaders
 \hbox{$#1\mkern-2mu\mathord-\mkern-2mu$}\hfill
 \mkern-6mu\mathord\rightarrow$}%
\def\leftarrowfill@#1{$#1\m@th\mathord\leftarrow\mkern-6mu\cleaders
 \hbox{$#1\mkern-2mu\mathord-\mkern-2mu$}\hfill\mkern-6mu\mathord-$}%
\def\leftrightarrowfill@#1{$#1\m@th\mathord\leftarrow
\mkern-6mu\cleaders
 \hbox{$#1\mkern-2mu\mathord-\mkern-2mu$}\hfill
 \mkern-6mu\mathord\rightarrow$}%
\def\overrightarrow{\mathpalette\overrightarrow@}%
\def\overrightarrow@#1#2{\vbox{\ialign{##\crcr\rightarrowfill@#1\crcr
 \noalign{\kern-\ex@\nointerlineskip}$\m@th\hfil#1#2\hfil$\crcr}}}%

\def\overleftarrow{\mathpalette\overleftarrow@}%
\def\overleftarrow@#1#2{\vbox{\ialign{##\crcr\leftarrowfill@#1\crcr
 \noalign{\kern-\ex@\nointerlineskip}$\m@th\hfil#1#2\hfil$\crcr}}}%
\def\overleftrightarrow{\mathpalette\overleftrightarrow@}%
\def\overleftrightarrow@#1#2{\vbox{\ialign{##\crcr
   \leftrightarrowfill@#1\crcr
 \noalign{\kern-\ex@\nointerlineskip}$\m@th\hfil#1#2\hfil$\crcr}}}%
\def\underrightarrow{\mathpalette\underrightarrow@}%
\def\underrightarrow@#1#2{\vtop{\ialign{##\crcr$\m@th\hfil#1#2\hfil
  $\crcr\noalign{\nointerlineskip}\rightarrowfill@#1\crcr}}}%

\def\underleftarrow{\mathpalette\underleftarrow@}%
\def\underleftarrow@#1#2{\vtop{\ialign{##\crcr$\m@th\hfil#1#2\hfil
  $\crcr\noalign{\nointerlineskip}\leftarrowfill@#1\crcr}}}%
\def\underleftrightarrow{\mathpalette\underleftrightarrow@}%
\def\underleftrightarrow@#1#2{\vtop{\ialign{##\crcr$\m@th
  \hfil#1#2\hfil$\crcr
 \noalign{\nointerlineskip}\leftrightarrowfill@#1\crcr}}}%

\def\qopnamewl@#1{\mathop{\operator@font#1}\nlimits@}
\let\nlimits@\displaylimits
\def\setboxz@h{\setbox\z@\hbox}

\def\varlim@#1#2{\mathop{\vtop{\ialign{##\crcr
 \hfil$#1\m@th\operator@font lim$\hfil\crcr
 \noalign{\nointerlineskip}#2#1\crcr
 \noalign{\nointerlineskip\kern-\ex@}\crcr}}}}

 \def\rightarrowfill@#1{\m@th\setboxz@h{$#1-$}\ht\z@\z@
  $#1\copy\z@\mkern-6mu\cleaders
  \hbox{$#1\mkern-2mu\box\z@\mkern-2mu$}\hfill
  \mkern-6mu\mathord\rightarrow$}
\def\leftarrowfill@#1{\m@th\setboxz@h{$#1-$}\ht\z@\z@
  $#1\mathord\leftarrow\mkern-6mu\cleaders
  \hbox{$#1\mkern-2mu\copy\z@\mkern-2mu$}\hfill
  \mkern-6mu\box\z@$}

\def\projlim{\qopnamewl@{proj\,lim}}
\def\injlim{\qopnamewl@{inj\,lim}}
\def\varinjlim{\mathpalette\varlim@\rightarrowfill@}
\def\varprojlim{\mathpalette\varlim@\leftarrowfill@}
\def\varliminf{\mathpalette\varliminf@{}}
\def\varliminf@#1{\mathop{\underline{\vrule\@depth.2\ex@\@width\z@
   \hbox{$#1\m@th\operator@font lim$}}}}
\def\varlimsup{\mathpalette\varlimsup@{}}
\def\varlimsup@#1{\mathop{\overline
  {\hbox{$#1\m@th\operator@font lim$}}}}

%
%
%
%
%
%
\begingroup \catcode `|=0 \catcode `[= 1
\catcode`]=2 \catcode `\{=12 \catcode `\}=12
\catcode`\\=12
|gdef|@alignverbatim#1\end{align}[#1|end[align]]
|gdef|@salignverbatim#1\end{align*}[#1|end[align*]]

|gdef|@alignatverbatim#1\end{alignat}[#1|end[alignat]]
|gdef|@salignatverbatim#1\end{alignat*}[#1|end[alignat*]]

|gdef|@xalignatverbatim#1\end{xalignat}[#1|end[xalignat]]
|gdef|@sxalignatverbatim#1\end{xalignat*}[#1|end[xalignat*]]

|gdef|@gatherverbatim#1\end{gather}[#1|end[gather]]
|gdef|@sgatherverbatim#1\end{gather*}[#1|end[gather*]]

|gdef|@gatherverbatim#1\end{gather}[#1|end[gather]]
|gdef|@sgatherverbatim#1\end{gather*}[#1|end[gather*]]

|gdef|@multilineverbatim#1\end{multiline}[#1|end[multiline]]
|gdef|@smultilineverbatim#1\end{multiline*}[#1|end[multiline*]]

|gdef|@arraxverbatim#1\end{arrax}[#1|end[arrax]]
|gdef|@sarraxverbatim#1\end{arrax*}[#1|end[arrax*]]

|gdef|@tabulaxverbatim#1\end{tabulax}[#1|end[tabulax]]
|gdef|@stabulaxverbatim#1\end{tabulax*}[#1|end[tabulax*]]

|endgroup

\def\align{\@verbatim \frenchspacing\@vobeyspaces \@alignverbatim
You are using the "align" environment in a style in which it is not defined.}

\@namedef{align*}{\@verbatim\@salignverbatim
You are using the "align*" environment in a style in which it is not defined.}
\expandafter\let\csname endalign*\endcsname =\endtrivlist

\def\alignat{\@verbatim \frenchspacing\@vobeyspaces \@alignatverbatim
You are using the "alignat" environment in a style in which it is not defined.}

\@namedef{alignat*}{\@verbatim\@salignatverbatim
You are using the "alignat*" environment in a style in which it is not defined.}
\expandafter\let\csname endalignat*\endcsname =\endtrivlist

\def\xalignat{\@verbatim \frenchspacing\@vobeyspaces \@xalignatverbatim
You are using the "xalignat" environment in a style in which it is not defined.}

\@namedef{xalignat*}{\@verbatim\@sxalignatverbatim
You are using the "xalignat*" environment in a style in which it is not defined.}
\expandafter\let\csname endxalignat*\endcsname =\endtrivlist

\def\gather{\@verbatim \frenchspacing\@vobeyspaces \@gatherverbatim
You are using the "gather" environment in a style in which it is not defined.}

\@namedef{gather*}{\@verbatim\@sgatherverbatim
You are using the "gather*" environment in a style in which it is not defined.}
\expandafter\let\csname endgather*\endcsname =\endtrivlist

\def\multiline{\@verbatim \frenchspacing\@vobeyspaces \@multilineverbatim
You are using the "multiline" environment in a style in which it is not defined.}

\@namedef{multiline*}{\@verbatim\@smultilineverbatim
You are using the "multiline*" environment in a style in which it is not defined.}
\expandafter\let\csname endmultiline*\endcsname =\endtrivlist

\def\arrax{\@verbatim \frenchspacing\@vobeyspaces \@arraxverbatim
You are using a type of "array" construct that is only allowed in AmS-LaTeX.}

\def\tabulax{\@verbatim \frenchspacing\@vobeyspaces \@tabulaxverbatim
You are using a type of "tabular" construct that is only allowed in AmS-LaTeX.}

\@namedef{arrax*}{\@verbatim\@sarraxverbatim
You are using a type of "array*" construct that is only allowed in AmS-LaTeX.}
\expandafter\let\csname endarrax*\endcsname =\endtrivlist

\@namedef{tabulax*}{\@verbatim\@stabulaxverbatim
You are using a type of "tabular*" construct that is only allowed in AmS-LaTeX.}
\expandafter\let\csname endtabulax*\endcsname =\endtrivlist


 \def\endequation{%
     \ifmmode\ifinner 
      \iftag@
        \addtocounter{equation}{-1} 
        $\hfil
           \displaywidth\linewidth\@taggnum\egroup \endtrivlist
        \global\tag@false
        \global\@ignoretrue
      \else
        $\hfil
           \displaywidth\linewidth\@eqnnum\egroup \endtrivlist
        \global\tag@false
        \global\@ignoretrue
      \fi
     \else
      \iftag@
        \addtocounter{equation}{-1} 
        \eqno \hbox{\@taggnum}
        \global\tag@false%
        $$\global\@ignoretrue
      \else
        \eqno \hbox{\@eqnnum}
        $$\global\@ignoretrue
      \fi
     \fi\fi
 }

 \newif\iftag@ \tag@false

 \def\TCItag{\@ifnextchar*{\@TCItagstar}{\@TCItag}}
 \def\@TCItag#1{%
     \global\tag@true
     \global\def\@taggnum{(#1)}%
     \global\def\@currentlabel{#1}}
 \def\@TCItagstar*#1{%
     \global\tag@true
     \global\def\@taggnum{#1}%
     \global\def\@currentlabel{#1}}

  \@ifundefined{tag}{
     \def\tag{\@ifnextchar*{\@tagstar}{\@tag}}
     \def\@tag#1{%
         \global\tag@true
         \global\def\@taggnum{(#1)}}
     \def\@tagstar*#1{%
         \global\tag@true
         \global\def\@taggnum{#1}}
  }{}

\def\dfrac#1#2{{\displaystyle {#1 \over #2}}}%
%
%
%

\makeatother

\begin{document}

\title{Fast Decoding of Low Density Lattice Codes}
\pubid{}
\specialpapernotice{}
\author{ Shuiyin Liu, Yi Hong, Emanuele Viterbo \\
ECSE Department, Monash University\\
Melbourne, VIC 3800, Australia\\
{shuiyin.liu, yi.hong, emanuele.viterbo@monash.edu} \and Alessia Marelli,
and Rino Micheloni \\
Flash Signal Processing Lab, Microsemi Corporation\\
Milan, 20127, Italy\\
{alessia.marelli, rino.micheloni@microsemi.com } \thanks{%
This work is supported by ARC under Grant Linkage Project No. LP160100002.} }
\maketitle

\begin{abstract}
Low density lattice codes (LDLC) are a family of lattice codes that can be
decoded efficiently using a message-passing algorithm. In the original LDLC
decoder, the message exchanged between variable nodes and check nodes are
continuous functions, which must be approximated in practice. A promising
method is Gaussian approximation (GA), where the messages are approximated
by Gaussian functions. However, current GA-based decoders share two
weaknesses: firstly, the convergence of these approximate decoders is
unproven; secondly, the best known decoder requires $O(2^d)$ operations at
each variable node, where $d$ is the degree of LDLC. It means that existing
decoders are very slow for long codes with large $d$. The contribution of this
paper is twofold: firstly, we prove that all GA-based LDLC decoders converge
sublinearly (or faster) in the high signal-to-noise ratio (SNR) region;
secondly, we propose a novel GA-based LDLC decoder which requires only $O(d)$
operations at each variable node. Simulation results confirm that
the error correcting performance of proposed decoder is the same
as the best known decoder, but with a much lower decoding complexity.


\end{abstract}


\section{Introduction}
Low density lattice codes (LDLC) is an efficiently decodable subclass of
lattice codes that approach the capacity of additive white Gaussian noise
(AWGN) channel \cite{sommer08}. Analogous to low density parity check codes
(LDPC), the inverse of LDLC generating matrix is sparse, allowing
message-passing decoding. LDLC is a natural fit for wireless communications
since both the codeword and channel are real-valued. Recently, LDLC has been
applied to many different communication systems, e.g., multiple-access relay
channel \cite{Chen15X}, half-duplex relay channels \cite{Chen16},
full-duplex relay channel \cite{Ferdinand15}, and secure communications \cite%
{Hooshmand16}. However, the basic disadvantage of LDLC is its high decoding
complexity, which limits its practical application. The commonly used LDLC
are relatively short, e.g., $1000$ in \cite{Chen15X,Chen16,Ferdinand15}. For
longer code length, the authors in \cite{Khodaiemehr17} suggest to use LDPC
lattice codes \cite{Sadeghi06} instead of LDLC, even though LDLC have better
performance than LDPC lattice codes.

The message-passing decoder of LDLC is slow because both the variable and
check nodes need to process continuous functions. The bottleneck occurs in
variable nodes, which have to compute the product of $d-1$ periodic continuous
functions, where $d$ is the degree of LDLC. This complicated operation
dramatically slows down the whole decoding process. To reduce the decoding
complexity, Gaussian approximation (GA)-based decoders have been proposed in
\cite{Kurkoski08,Yona09,Hernandez16}. The basic idea is to approximate the
messages exchanged between variable and check nodes by Gaussian functions.
The operation at each variable node reduces to compute the product of $d-1$
periodic Gaussian functions, ending up with a \emph{Gaussian mixture}.
However, it is costly to approximate a Gaussian mixture by a single Gaussian
function. Current approaches suggest to use the dominating Gaussian in the
mixture, by sorting \cite{Yona09} or exhaustive search \cite{Hernandez16}.
Even then, the computational complexity remains high, e.g., $O(2^d)$
in \cite{Hernandez16}. Since the value of $d$ is commonly set to $7$ \cite%
{sommer08}, current GA-based LDLC decoders are slow for long codes. The
other open question is, the convergence of GA-based LDLC decoders has not yet been proved.

The main contribution of this paper is twofold: first, we prove that in the
high signal-to-noise ratio (SNR) region, all GA-based LDLC decoders converge
\emph{sublinearly or faster}. This result verifies the goodness of Gaussian
approximation in LDLC decoding. Second, we propose a novel GA-based LDLC
decoder which requires only $O(d)$ operations at each variable node. The key
idea is to make use of the tail effect of Gaussian functions, i.e., if two
Gaussian functions have very distant means, then the product of them is approximately $0$. This fact allows us to approximate the Gaussian mixture by $2d-2$
Gaussian functions, without sorting  as in \cite{Yona09} or exhaustive search as in \cite{Hernandez16}. Simulation results
confirm that the performance of proposed decoder is the same as the best known one in \cite%
{Hernandez16}, give the same number of iterations. Note that having lower
decoding complexity enables us to run more iterations to further improve the
performance.


Section II presents the system model. Section III describes the convergence
of GA-based LDLC decoders. Section IV demonstrates the proposed decoder.
Section V shows the simulation results and comparisons with other decoders.
Section VI sets out the theoretical and practical conclusions. The Appendix
contains the proofs of the theorems.

\section{System Model}

\subsection{LDLC Encoding}

In an $N$-dimensional  lattice code, the codewords are defined by
\begin{equation}
\setlength{\abovedisplayskip}{3pt}
\setlength{\belowdisplayskip}{3pt}
\mathbf{x}=\mathbf{G}\mathbf{u},  \label{G}
\end{equation}
where $\mathbf{u} \in \mathbb{Z}^{N\times1}$ is an information integer
vector, $\mathbf{G} \in \mathbb{R}^{N\times N} $ is a real-valued {\em generator
matrix}, and $\mathbf{x} \in \mathbb{R}^{N\times1}$ is a real-valued
codeword. An LDLC is defined by a sparse {\em parity check matrix}, which is related to the generator matrix by
\begin{equation}
\setlength{\abovedisplayskip}{3pt}
\setlength{\belowdisplayskip}{3pt}
\mathbf{H}=\mathbf{G}^{-1},  \label{H}
\end{equation}
The sparsity of $\mathbf{H}$ enables the use of a
massage-passing algorithm to decode LDLC.

In the original construction of LDLC \cite{sommer08}, every row and column
in $\mathbf{H}$ has same $d$ non-zero values, except for random sign and
change of order. These $d$ values are referred to as \emph{generating sequence $\{h_1,\dots,h_d\}$ %
}, which is chosen in  \cite{sommer08}   as
\begin{equation}
\{h_1,\dots,h_d\} = \left\{ \pm1, \pm \frac{1}{\sqrt{d}}, ... , \pm \frac{1}{\sqrt{d}} \right\}.
\label{G_sequel}
\end{equation}
Note that there are only two distinct values, $1$ and $1/\sqrt{d}$. The
value of $d$ is referred to as the \emph{degree} of LDLC.

When a LDLC is used over an additive white Gaussian noise (AWGN) channel, we have
\begin{equation}
\setlength{\abovedisplayskip}{3pt}
\setlength{\belowdisplayskip}{3pt}
\mathbf{y}=\mathbf{x}+\mathbf{n},  \label{AWGN}
\end{equation}
where $\mathbf{n} \sim \mathcal{N}(0, \sigma^2 \mathbf{I})$ is a noise
vector, $\sigma^2$ is each dimension noise variance, and $\mathbf{I}$ is a $%
N-$dimensional unit matrix. The power of each LDLC codeword, denoted as $\|%
\mathbf{x}\|^2$, may be very large. Therefore, a \emph{shaping algorithm} is
required by LDLC, in order to make $\mathbf{x}$ distributed over a bounded
region, so called \emph{shaping region}. Various shaping methods have been
proposed in the literature \cite%
{Sommer09shaping,Ferdinand14shaping,Zhou17shaping}. In this work, we
consider the \emph{hypercube shaping} in \cite{Sommer09shaping}, where the
shaping region is a hypercube centered at the origin. Note that our results
are directly applicable to any shaping method.

\subsection{LDLC Decoding}

Since $\mathbf{H}$ is sparse, LDLC can be decoded by a message passing
algorithm \cite{sommer08}. The process takes four steps:

\begin{enumerate}
\item \emph{Initialization}: The $k^{\text{th}}$ variable node, denoted as $%
v_k$, sends a single Gaussian pdf $f_k(w)$ to its neighbor check nodes:
\begin{equation}
\setlength{\abovedisplayskip}{3pt}
\setlength{\belowdisplayskip}{3pt}
f_k(w)=\mathcal{N}(w; y_k, \sigma^2)=\frac{1}{\sqrt{2\pi\sigma^2}}e^{-\frac{%
(w-y_k)^2}{2\sigma^2}},  \label{initialize}
\end{equation}
for $k=1, ... , N$, where $y_k$ is the $k^{\text{th}}$ element in $\mathbf{y}
$.

\item \emph{Check-to-variable passing}: The $t^{\text{th}}$ check node,
denoted as $c_t$, sends a message $g_l(w)$ via its $l^{\text{th}}$ edge.
Without loss of generality, we assume that $c_t$ receives $f_i(w)$ from its $%
i^{\text{th}}$ edge, for $i=1, ... , d$. Let $h_i$ be the label of $i^{\text{%
th}}$ edge. The computation of $g_l(w)$ takes four steps:

\begin{enumerate}
\item \emph{convolution}: all $f_i(w/h_i)$, except $i = l$, are convolved:
\begin{eqnarray}
p_l(w) &=& f_1\left(\frac{w}{h_1}\right)\ast \cdots \ast f_{l-1}\left(\frac{w}{%
h_{l-1}}\right)  \notag \\
& & \ast f_{l+1}\left(\frac{w}{h_{l+1}}\right) \ast \cdots \ast f_{d}\left(%
\frac{w}{h_{d}}\right).  \label{convolution}
\end{eqnarray}

\item \emph{stretching}: the function $p_l(w)$ is stretched by $-h_l$:
\begin{equation}
\setlength{\abovedisplayskip}{3pt}
\setlength{\belowdisplayskip}{3pt}
\hat{p}_l(w)=p_l(-h_lw).  \label{stretching}
\end{equation}

\item \emph{periodic extension}: $\hat{p}_l(w)$ is extended to a periodic
function with period $1/|h_l|$:
\begin{equation}
\setlength{\abovedisplayskip}{3pt}
\setlength{\belowdisplayskip}{3pt}
g_l(w)=\sum^{\infty}_{i=-\infty}\hat{p}_l\left(w-\frac{i}{h_l}\right).
\label{periodic}
\end{equation}
\end{enumerate}

\item \emph{Variable-to-check passing}: The variable node $v_k$ sends a
message $f_j(w)$ via its $j^{\text{th}}$ edge. Similarly, we assume that $%
v_k $ receives $g_i(w)$ from its $i^{\text{th}}$ edge, for $i=1, ... , d$.
The computation of $f_j(w)$ takes two steps:

\begin{enumerate}
\item \emph{product}: all $g_i(w)$, except $i = j$, are multiplied:
\begin{equation}
\setlength{\abovedisplayskip}{3pt}
\setlength{\belowdisplayskip}{3pt}
\hat{f}_j(w)=\mathcal{N}(w; y_k, \sigma^2)\prod^{d}_{i=1,i \neq j}g_i(w).
\label{prod}
\end{equation}

\item \emph{normalization}: $\hat{f}_j(w)$ is normalized as:
\begin{equation}
\setlength{\abovedisplayskip}{3pt}
\setlength{\belowdisplayskip}{3pt}
f_j(w)=\frac{\hat{f}_j(w)}{\int^{\infty}_{-\infty}\hat{f}_j(w)dw}.
\label{normal}
\end{equation}
\end{enumerate}

Steps $2$ and $3$ are repeated until the desired number of iteration is
reached.

\item \emph{Final decision}: The variable node $v_k$ computes the product of
all received messages:
\begin{equation}
\setlength{\abovedisplayskip}{3pt}
\setlength{\belowdisplayskip}{3pt}
\hat{f}^{\text{final}}_k(w)=\mathcal{N}(w; y_k,
\sigma^2)\prod^{d}_{i=1}g_i(w).  \label{prod_final}
\end{equation}

The estimation of ${\mathbf{x}}=\{{x}_k\}$  in (\ref{AWGN}) and ${\mathbf{u}}$ in  (\ref{G})
are obtained by
\begin{equation}
\setlength{\abovedisplayskip}{3pt}
\setlength{\belowdisplayskip}{3pt}
\hat{x}_k=\arg \max_{w} \hat{f}^{\text{final}}_k(w).  \label{x_estimation}
\end{equation}
\begin{equation}
\setlength{\abovedisplayskip}{3pt}
\setlength{\belowdisplayskip}{3pt}
\hat{\mathbf{u}}=\lfloor\mathbf{H}\hat{\mathbf{x}} \rceil.
\label{u_hat_estimation}
\end{equation}
The operation $\left\lfloor \cdot \right\rceil$ rounds a number to the closest integer.
\end{enumerate}


According to (\ref{convolution})-(\ref{normal}), the messages exchanged
between variable and check nodes are continuous functions. In Sommer's implementation \cite{sommer08}, each continuous message is quantized and represented by a vector of $1024$ elements. The convolution phase at each check node, as well as the product phase at each variable node, have very high memory and computational requirements. This limits its application to relatively small dimensional LDLC.


\subsection{Gaussian-Approximation based LDLC Decoding}

Simplified decoding algorithms have been proposed in \cite%
{Kurkoski08,Yona09,Hernandez16}. The key idea is to approximate the variable
message $f_{j}(w)$ in (\ref{normal}) by a single Gaussian pdf:
\begin{equation}
\setlength{\abovedisplayskip}{3pt}
\setlength{\belowdisplayskip}{3pt}
f_{j}(w)\approx \mathcal{N}(w;m_{v,j},\sigma _{v,j}^{2}),  \label{MM}
\end{equation}%
where $m_{v,j}$ and $\sigma _{v,j}^{2}$ are the mean and variance of $%
f_{j}(w)$.

As a consequence, the check message $g_l(w)$ in (\ref{periodic}) reduces to a
periodic Gaussian pdf:
\begin{equation}
\setlength{\abovedisplayskip}{3pt}
\setlength{\belowdisplayskip}{3pt}
g_l(w)=\sum^{\infty}_{i=-\infty}\mathcal{N}_l\left(w; m_{c,l}-\frac{i}{h_l},
\sigma^2_{c,l}\right),  \label{Gaussian_periodic}
\end{equation}
where all component Gaussian pdfs in $g_l(w)$ have the same mean and
variance, denoted as $m_{c,l}$ and $\sigma^2_{c,l}$, respectively.


From (\ref{MM}) and (\ref{Gaussian_periodic}), we see that both variable and
check nodes only need to pass two values: the mean and variance of a
Gaussian function. This will greatly reduce the memory requirement for the messages. However, it is still
costly to perform the Gaussian approximation in (\ref{MM}). The problem lies
in the computation of the \emph{unnormalized} variable messages $\hat{f}_{j}(w)$,
which now reduce to
\begin{equation}
\setlength{\abovedisplayskip}{3pt}
\setlength{\belowdisplayskip}{3pt}
\hat{f}_{j}(w)=\mathcal{N}(w;y_{k},\sigma ^{2})\prod_{i=1,i\neq
j}^{d}\sum_{k=-\infty }^{\infty }\mathcal{N}_{i}\left( w;m_{c,i}-\frac{k}{%
h_{i}},\sigma _{c,i}^{2}\right) .  \label{Gaussian_prod}
\end{equation}%
which is a \emph{Gaussian mixture} of infinitely many components.

To simplify (\ref{Gaussian_prod}), the authors in \cite{Hernandez16} replace
each periodic Gaussian by only two Gaussians\footnote{In \cite{Hernandez16} the case using three Gaussians is also presented and it is shown to provides marginal performance improvements for a much higher complexity.} with a mean value close to $y_{k}$:
\begin{equation}
\setlength{\abovedisplayskip}{3pt}
\setlength{\belowdisplayskip}{3pt}
\hat{f}_{j}(w)\approx \mathcal{N}\prod_{i=1,i\neq j}^{d}\left( \mathcal{N}%
_{L,i}+\mathcal{N}_{R,i}\right) ,  \label{app_Gaussian_prod}
\end{equation}%
where $\mathcal{N}_{L,i}$ and $\mathcal{N}_{R,i}$ are the two Gaussian pdfs in the $i^{\text{th}}$ periodic Gaussian  with mean closest to $y_{k}$.
Recalling the fact that the product of Gaussian functions is still a single
Gaussian. The simplified $\hat{f}_{j}(w)$ can be written as a sum of $2^{d-1}$
Gaussian pdfs. This means that in each iteration, the computational
complexity at each variable node is proportional to $O(2^{d-1})$. Note that with
the value of $d=7$ used in \cite{sommer08}, there complexity is relatively large.

In summary, although the Gaussian-approximation (GA) based LDLC
decoders use less memory than the original decoder described in \cite{sommer08}, they are still too complex to be used for long codes. In Section IV we will propose a much faster decoder (still based on GA) to overcome these limitations.
Before that, we will tackle the other open
question whether it is possible to prove that GA-based LDLC decoders actually converge.

\section{Convergence Analysis of GA-Based LDLC Decoders}

In this section, we study the convergence speed of GA-based LDLC decoding
algorithms. Recalling that the original LDLC decoder has the following
property \cite{sommer08}:
\begin{eqnarray}
\lim_{K\rightarrow \infty }\sigma _{v,j}^{2} &=&0,\text{ if }|h_{j}|= 1/%
\sqrt{d}  \notag \\
\lim_{K\rightarrow \infty }\sigma _{v,j}^{2} &\leq &\theta ,\text{ if }%
|h_{j}|= 1  \label{convergence_op}
\end{eqnarray}%
where $K$ is the number of iterations and $\theta $ is a finite number. In
other words, each variable node generates $d-1$ \emph{narrow messages},
whose variance converges to $0$ as
$K\rightarrow \infty $. This implies the convergence of the pdf to Dirac centered at $m_{v,j}$:
\begin{equation}
\setlength{\abovedisplayskip}{3pt}
\setlength{\belowdisplayskip}{3pt}
\lim_{K\rightarrow \infty }f_{j}(w)=\delta(w-m_{v,j}),\text{ ~~~if }|h_{j}|= 1/\sqrt{d}
\label{convergence_mean}
\end{equation}
which ensures the convergence of the original LDLC decoder in \cite{sommer08}.

To study the impact of Gaussian approximation in (\ref{MM}) on convergence,
we also track the changes in variance $\sigma _{v,j}^{2}$ as $K$ increasing.
We have the following theorem.

\begin{theorem}
For all GA-based LDLC decoders with $d\geq 5$, at high SNR, the variances of
all variable messages satisfy
\begin{equation}
\sigma _{v,j}^{2}<\left\{
\begin{array}{cc}
\dfrac{1}{1.6K}\sigma ^{2}, & \text{ \ \ \ if }|h_{j}|= 1/\sqrt{d} \\[1.2ex]
\dfrac{2}{3}\sigma ^{2}, & \text{if }|h_{j}|= 1%
\end{array}%
\right.  \label{convegence_GA}
\end{equation}
\end{theorem}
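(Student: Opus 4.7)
My plan is to exploit the regular structure of the LDLC parity-check matrix to collapse the message-passing dynamics into a coupled pair of scalar variance recursions, and then bound each recursion separately. Because every row and column of $\mathbf{H}$ has exactly one entry with $|h|=1$ and $d-1$ entries with $|h|=1/\sqrt{d}$, density-evolution symmetry lets me use a single variance per label-type: $\sigma_{v,A}^2(K)$, $\sigma_{v,B}^2(K)$ on the variable side and $\sigma_{c,A}^2(K)$, $\sigma_{c,B}^2(K)$ on the check side, with initialisation $\sigma_{v,A}^2(0)=\sigma_{v,B}^2(0)=\sigma^2$ from (\ref{initialize}).

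Under GA, the check update (\ref{convolution})--(\ref{periodic}) preserves Gaussianity exactly and gives $\sigma_{c,l}^2=h_l^{-2}\sum_{i\ne l}h_i^2\sigma_{v,i}^2$, while at high SNR ($\sigma\ll 1/\sqrt{d}$) the periodic check messages in the variable product (\ref{prod}) are well resolved, so only the mode nearest $y_k$ contributes and the product collapses to a single Gaussian with $1/\sigma_{v,j}^2=1/\sigma^2+\sum_{i\ne j}1/\sigma_{c,i}^2$. Substituting the LDLC label pattern and grouping by type yields the coupled system: $\sigma_{c,A}^2(K)=\frac{d-1}{d}\sigma_{v,B}^2(K-1)$; $\sigma_{c,B}^2(K)=d\,\sigma_{v,A}^2(K-1)+(d-2)\sigma_{v,B}^2(K-1)$; $1/\sigma_{v,A}^2(K)=1/\sigma^2+(d-1)/\sigma_{c,B}^2(K)$; and $1/\sigma_{v,B}^2(K)=1/\sigma^2+1/\sigma_{c,A}^2(K)+(d-2)/\sigma_{c,B}^2(K)$.

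The type-A bound is the easy half. The GA product with the channel Gaussian forces $\sigma_{v,A}^2(K),\sigma_{v,B}^2(K)\le\sigma^2$ for every $K$, hence $\sigma_{c,B}^2(K)\le(2d-2)\sigma^2$, and substituting into the $\sigma_{v,A}^2$ update gives $1/\sigma_{v,A}^2(K)\ge 1/\sigma^2+(d-1)/((2d-2)\sigma^2)=3/(2\sigma^2)$, i.e., $\sigma_{v,A}^2(K)\le\frac{2}{3}\sigma^2$ for every $K\ge 1$ and every $d\ge 2$. Strict inequality follows for $K\ge 2$ because both $\sigma_{v,A}^2(K-1)$ and $\sigma_{v,B}^2(K-1)$ are then already below $\sigma^2$.

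For the type-B bound, I set $\gamma_K:=\sigma^2/\sigma_{v,B}^2(K)$. Combining the $\sigma_{v,B}^2$ update with $\sigma_{c,A}^2(K)=\frac{d-1}{d}\sigma_{v,B}^2(K-1)$ and $\sigma_{c,B}^2(K)\le(2d-2)\sigma^2$ produces the affine recursion $\gamma_K\ge\frac{3d-4}{2(d-1)}+\frac{d}{d-1}\gamma_{K-1}$ with $\gamma_0=1$. I would then prove $\gamma_K\ge 1.6K$ by induction: the base case is $\gamma_1=(5d-4)/(2(d-1))\ge 1.6$, equivalent to $1.8d\ge 0.8$ and trivially true for $d\ge 5$; the inductive step, after substituting $\gamma_{K-1}\ge 1.6(K-1)$, reduces to the arithmetic inequality $(3d-4)/2\ge 1.6(d-K)$, i.e., $K\ge(d+20)/16$. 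The main obstacle I expect is precisely the small range of initial $K$ where this one-step induction is just too weak (the coefficient $d/(d-1)\to 1$ as $d$ grows, flattening the gain per iteration); for those residual cases I would fall back on the exact closed-form lower bound $\gamma_K\ge\frac{3d-2}{2}\bigl(\frac{d}{d-1}\bigr)^K-\frac{3d-4}{2}$ of the affine recursion and verify directly, by monotonicity of the gap in $K$, that its exponential growth already dominates the linear target $1.6K$ for every $(d,K)$ with $d\ge 5$ and $K\ge 1$.
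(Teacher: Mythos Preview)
Your overall strategy---collapsing the GA update rules into two scalar variance recursions via the label symmetry, invoking the high-SNR single-Gaussian-survivor argument, and then bounding the recursion---is exactly what the paper does. Your derivation of the four coupled update equations and of the type-$A$ bound $\sigma_{v,A}^2\le\frac{2}{3}\sigma^2$ is correct and matches the paper's reasoning.

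The gap is in the type-$B$ argument. Your crude estimate $\sigma_{c,B}^2(K)\le(2d-2)\sigma^2$, obtained from $\sigma_{v,A}^2,\sigma_{v,B}^2\le\sigma^2$ alone, is too weak to deliver the constant~$1.6$. The resulting affine lower bound has additive term $\frac{3d-4}{2(d-1)}\to\frac{3}{2}$ and multiplicative factor $\frac{d}{d-1}\to 1$ as $d\to\infty$, so for any fixed $K$ the closed-form bound tends to $1+\frac{3}{2}K$, which is \emph{below} $1.6K$ once $K\ge 11$. Concretely, take $d=1000$, $K=20$: your closed form gives $\frac{2998}{2}\bigl(\tfrac{1000}{999}\bigr)^{20}-\frac{2996}{2}\approx 31.3<32=1.6\cdot20$. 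So neither the one-step induction nor the exponential closed form rescues the claim uniformly in $d\ge5$; the ``residual cases'' are not a finite list but an unbounded region $10\lesssim K\ll d$.

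The repair, and what the paper actually does, is to feed your already-established type-$A$ bound back into $\sigma_{c,B}^2$. After the first couple of iterations one has $V_1^{(k-1)}<\tfrac{2}{3}\sigma^2$ and (inductively) $V_{\neq1}^{(k-1)}<\sigma^2/(1.6(k-1))$, which together give the sharper estimate $C_{\neq1}^{(k)}<d\sigma^2$. With this the variable-node update yields $\gamma_k>\frac{2(d-1)}{d}+\gamma_{k-1}\ge 1.6+\gamma_{k-1}$ for every $d\ge5$, and the linear bound $\gamma_K>1.6K$ follows immediately by telescoping, with no awkward case analysis. In short: do not freeze $\sigma_{c,B}^2$ at the iteration-$0$ value $(2d-2)\sigma^2$; bootstrap it using the type-$A$ result you already proved.
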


\begin{proof}
See Appendix A.
\end{proof}

Theorem 1 shows that all GA-based LDLC decoders converge {sublinearly or
faster} at high SNR. This result demonstrates the goodness of Gaussian
approximation in LDLC decoding since each variable message also generates $d-1$
narrow messages. In practice, the bounds (\ref{convegence_GA}) is tight even
when SNR is close to the Shannon limit. An example is given bellow.

\begin{example}
We test a GA-based decoder in \cite{Hernandez16}. We consider a LDLC in \cite{LDLC961}, where $N=961$ and $d=7$. We tune the size of input alphabet such that the code rate $R$ is $2.8987$ bits/symbol. With hypercube shaping and uniform channel input, the Shannon capacity is $20.4$ dB. At SNR = $21.9$ dB, we simulate the average of the variances $\bar{\sigma}^2_{v,j}$ of all variable messages with $|h_{j}|=1/\sqrt{d}$. We define the ratio $\bar{\sigma}^2_{v,j}/\sigma^2$ as \emph{convergence speed}. In Fig. \ref{fig:1}, we compare the simulated convergence speed with the estimated one, i.e., $1/(1.6K)$ from (\ref{convegence_GA}), as a function of $K$. We see the bound is very tight even for finite SNRs.
\end{example}

\begin{figure}[tbp]
\centering \includegraphics[scale=0.65]{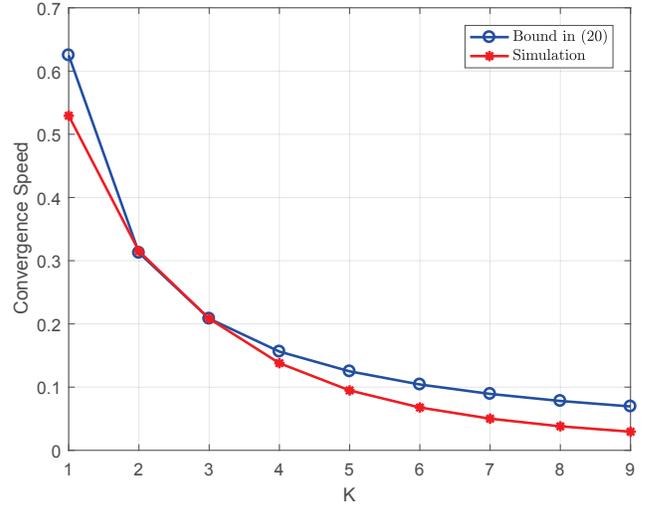} \vspace{-3
mm}
\caption{Convergence speed of GA-based LDLC decoder} \vspace{-3
mm}
\label{fig:1}
\end{figure}

\section{A Fast Decoding Algorithm of LDLC}

Theorem 1 shows that at high SNR, all GA-based decoders have a common  upper bound on the convergence speed. This implies that they will all have a similar performance at high SNR. The question is to find the most efficient GA-based decoder. In what follows we identify a GA-based decoder with a much lower complexity than the ones in the literature.



\subsection{Idea}
Recalling that the product of Gaussians returns a scaled single Gaussian,
e.g., for two Gaussians, we have
\begin{equation}
\setlength{\abovedisplayskip}{3pt}
\setlength{\belowdisplayskip}{3pt}
\mathcal{N}_{1}\left( w;m_{1},\sigma _{1}^{2}\right) \cdot \mathcal{N}%
_{2}\left( w;m_{2},\sigma _{2}^{2}\right) =c\mathcal{N}\left( w;m,\sigma
^{2}\right) ,  \label{prod_two_Gau}
\end{equation}%
where%
\begin{eqnarray}
\frac{1}{\sigma ^{2}} &=&\frac{1}{\sigma _{1}^{2}}+\frac{1}{\sigma _{2}^{2}}~,
~~~~~~~~
\frac{m}{\sigma ^{2}} ~~=~~\frac{m_{1}}{\sigma _{1}^{2}}+\frac{m_{2}}{\sigma
_{2}^{2}} \notag \\
c &=&\frac{1}{\sqrt{2\pi \left( \sigma _{1}^{2}+\sigma _{2}^{2}\right) }}%
\exp \left( -\frac{\left( m_{1}-m_{2}\right) ^{2}}{2\left( \sigma
_{1}^{2}+\sigma _{2}^{2}\right) }\right)   \label{c}
\end{eqnarray}
We have the following property:

\emph{Property 1:}We define $c$ as the \emph{height} of product of Gaussians, i.e., $ c\triangleq\mathcal{H}(\mathcal{N}_{1}\mathcal{N}_{2})$. Let the operation $\mathcal{P}(f(x))$ return the location of peak of a function $f(x)$, e.g., $\mathcal{P}(\mathcal{N}_{1})=m_1$. If the component Gaussians have very distant means/peaks (i.e., a large $|\mathcal{P}(\mathcal{N}_{1})-\mathcal{P}(\mathcal{N}_{2})|$), then $c\rightarrow 0$, and we can assume that their product is $\approx 0$. \QEDA 

This property allows us to simplify (\ref{app_Gaussian_prod}) by ignoring
a large number of vanishing products. Specifically, let $%
\mathcal{N}_{L,i}$ be the Gaussians with $\mathcal{P}(\mathcal{N}_{L,i}) \leq y_k$, and $\mathcal{N}_{R,i}$ with $\mathcal{P}(\mathcal{N}_{R,i}) \geq y_k$. We approximate (\ref{app_Gaussian_prod}) as follows:%
\begin{equation}
\setlength{\abovedisplayskip}{3pt}
\setlength{\belowdisplayskip}{3pt}
\hat{f}_{j}(w)\approx \mathcal{N}\prod_{i=1,i\neq j}^{d}\mathcal{N}_{L,i}+%
\mathcal{N}\prod_{i=1,i\neq j}^{d}\mathcal{N}_{R,i}.  \label{Proposed_App}
\end{equation}%
We ignore the cross-terms which
involve elements from both $\{\mathcal{N}_{L,i}\}^d_1$ and $\{\mathcal{N}_{R,i}\}^d_1$. For simplicity, we refer to the first term in (\ref{Proposed_App}) as \emph{left product}, and the second term as \emph{right product}.

To avoid the case where a cross-term has a greater height
than a left/right product, we need to select $\mathcal{N}_{L,i}$ and $\mathcal{N}_{R,i}$
carefully. Note that the crossover occurs when we have \emph{staggered pairs} of Gaussians. As shown in Fig. \ref{fig:2}, we consider
\begin{eqnarray*}
&&(\mathcal{N}_{L,1}+\mathcal{N}_{R,1})(\mathcal{N}_{L,2}+\mathcal{N}_{R,2})
\\
&=&\mathcal{N}_{L,1}\mathcal{N}_{L,2}+\mathcal{N}_{R,1}\mathcal{N}_{R,2}+%
\mathcal{N}_{L,1}\mathcal{N}_{R,2}+\mathcal{N}_{R,1}\mathcal{N}_{L,2}
\end{eqnarray*}%
where the two pairs of Gaussians are  staggered around the threshold, since $\mathcal{P}(\mathcal{N}_{L,1})$ is closer to $\mathcal{P}(\mathcal{N}_{R,2})$ than $\mathcal{P}(\mathcal{N}_{L,2})$. As a result, $\mathcal{H}(\mathcal{N}_{L,1}\mathcal{N}_{R,2})$ is greater than $\mathcal{H}(\mathcal{N}_{L,1}\mathcal{N}_{L,2})$ or $\mathcal{H}(\mathcal{N}_{R,1}\mathcal{N}_{R,2})$. It also means that both
$\mathcal{P}(\mathcal{N}_{L,1})$ and $\mathcal{P}(\mathcal{N}_{R,2})$ are close to $y_{k}$. Since the distance between $\mathcal{P}(\mathcal{N}_{L,i})$
and $\mathcal{P}(\mathcal{N}_{R,i})$ is either  $1$ or $\sqrt{d}$, then both $\mathcal{P}(\mathcal{N}_{L,2})$ and $\mathcal{P}(\mathcal{N}_{R,2})$ are far from $y_{k}$, at a distance up to $1$ or $\sqrt{d}$. This fact inspires us to break staggered pairs by deleting Gaussians whose means/peaks are far from $y_{k}$.

\emph{Left/Right Product Selection Criterion:} Consider the intervals $%
\mathcal{W}_i=\left[ y_{k}-\varepsilon_i,y_{k}+\varepsilon_i\right] $, where $0.5 <\varepsilon_i < \sqrt{d}$.

\begin{enumerate}
\item If $\mathcal{N}_{L,i}\in \mathcal{W}_i$ and $\mathcal{N}_{R,i}\notin
\mathcal{W}_i$, we select%
\begin{equation}
\setlength{\abovedisplayskip}{3pt}
\setlength{\belowdisplayskip}{3pt}
\mathcal{\hat{N}}_{L,i}=\mathcal{\hat{N}}_{R,i}=0.5\mathcal{N}_{L,i}.
\label{cri_1}
\end{equation}

\item If $\mathcal{N}_{L,i}\notin \mathcal{W}_i$ and $\mathcal{N}_{R,i}\in
\mathcal{W}_i$, we select%
\begin{equation}
\setlength{\abovedisplayskip}{3pt}
\setlength{\belowdisplayskip}{3pt}
\mathcal{\hat{N}}_{L,i}=\mathcal{\hat{N}}_{R,i}=0.5\mathcal{N}_{R,i}.
\label{cri_2}
\end{equation}

\item If $\mathcal{N}_{L,i}\in \mathcal{W}_i$ and $\mathcal{N}_{R,i}\in
\mathcal{W}_i$, we select%
\begin{equation}
\setlength{\abovedisplayskip}{3pt}
\setlength{\belowdisplayskip}{3pt}
\mathcal{\hat{N}}_{L,i}=\mathcal{N}_{L,i},\mathcal{\hat{N}}_{R,i}=\mathcal{N}%
_{R,i}.  \label{cri_3}
\end{equation}
\end{enumerate}

\noindent When $d=7$, we can set $\varepsilon_i=1$ for $|h_i|=1$, and $\varepsilon_i=1.7$ for $|h_i|=1/\sqrt{7}$.
A further discussion on the choice of $\varepsilon_i$ will be given in the journal version.
As a result, (\ref{Proposed_App}) is
updated to%
\begin{equation}
\setlength{\abovedisplayskip}{3pt}
\setlength{\belowdisplayskip}{3pt}
\hat{f}_{j}(w)\approx \mathcal{N}\prod_{i=1,i\neq j}^{d}\mathcal{\hat{N}}%
_{L,i}+\mathcal{N}\prod_{i=1,i\neq j}^{d}\mathcal{\hat{N}}_{R,i}.
\label{Proposed_App_Final}
\end{equation}%
A detailed explanation of proposed algorithm is given below.

\begin{figure}[tbp]
\centering \includegraphics[scale=0.8]{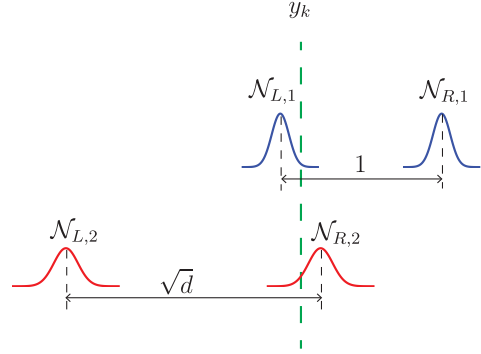} \vspace{-3
mm}
\caption{An example of staggered pairs of Gaussians} \vspace{-3
mm}
\label{fig:2}
\end{figure}

\subsection{Algorithm}

We only demonstrate the operation at each variable node, since the operation
at each check node is the same as \cite{Hernandez16}. The process takes
three steps:

\begin{enumerate}
\item \emph{Left/Right Product selection:} find $\mathcal{\hat{N}}_{L,i}$
and $\mathcal{\hat{N}}_{R,i}$, $i=1,...,d$, according to (\ref%
{cri_1})--(\ref{cri_3})

\item \emph{Mother message}: To avoid redundant computation, we compute the
left and right products from all inputs:%
\begin{equation}
\setlength{\abovedisplayskip}{3pt}
\setlength{\belowdisplayskip}{3pt}
\hat{f}_{\text{M}}(w)=\mathcal{N}\prod_{i=1}^{d}\mathcal{\hat{N}}_{L,i}+%
\mathcal{N}\prod_{i=1}^{d}\mathcal{\hat{N}}_{R,i},
\end{equation}%
which is referred to as the \emph{mother message}. Using (\ref{c}), we obtain
a sum of two scaled Gaussians:%
\begin{equation}
\setlength{\abovedisplayskip}{3pt}
\setlength{\belowdisplayskip}{3pt}
\hat{f}_{\text{M}}(w)=c_{L}\mathcal{N}_{L}+c_{R}\mathcal{N}_{R}\text{.}
\label{dominant_op}
\end{equation}

\item \emph{Individual message:} The message for the $i^{\text{th}}$ edge
can be obtained by subtracting $\mathcal{\hat{N}}_{L,i}$ and $\mathcal{\hat{N}}%
_{R,i}$ from $\hat{f}_{\text{M}}(w)$:%
\begin{equation}
\setlength{\abovedisplayskip}{3pt}
\setlength{\belowdisplayskip}{3pt}
\hat{f}_{j}(w)=c_{L,i}\mathcal{N}_{L,i}+c_{R,i}\mathcal{N}_{R,i}\text{.}
\end{equation}%
We normalize $\hat{f}_{j}(w)$ and apply Gaussian approximation%
\begin{eqnarray}
f_{j}(w) &=&\frac{c_{L,i}}{c_{L,i}+c_{R,i}}\mathcal{N}_{L,i}+\frac{c_{R,i}}{%
c_{L,i}+c_{R,i}}\mathcal{N}_{R,i}\text{ }  \notag \\
&\approx &\mathcal{N}(w;m_{v,j},\sigma _{v,j}^{2})\text{.}
\end{eqnarray}%
\end{enumerate}
\noindent {The values of $m_{v,j}$ and $\sigma _{v,j}^{2}$ are obtained from \cite[Eq. (2-3)]{Kurkoski10r}.} 


\subsection{Complexity}

The complexity of proposed variable node operation is dominated by (\ref%
{dominant_op}), which requires to apply (\ref{c}) $2d+2$ times. Therefore, the
complexity at each variable node is proportional to $O\left( d\right) $. This
is much lower than the best known decoder in \cite{Hernandez16}, which
requires $O(2^{d})$ operations at each variable node.

\section{Simulation Results}

This section compares the performance of the proposed LDLC decoder to the
best known decoder in \cite{Hernandez16}. Monte Carlo simulations
are used to estimate the symbol error rate (SER).

Fig. \ref{fig:3} shows the SER for LDLC with $d=7$ and $R=2.8987$ bits/symbol, using hypercube shaping. With hypercube shaping and uniform channel input, the Shannon capacity at $R=2.8987$ bits/symbol is $20.4$ dB. Both short and long codes are tested, i.e., $N=961$ and $N=10000$. Given $10$ iterations, the performance of proposed decoder coincides with the reference one
in both cases. Since the complexity of proposed decoder is much lower than the reference one,
we can run more iterations, e.g., $K=20$ for $N=10000$. In that case, the proposed decoder outperforms the reference one,
by about $0.8$ dB at SER $=10^{-5}$. Meanwhile, the gap to capacity (with cubic shaping) is about $0.2$ dB. This result confirms that our decoder works well for both short and long LDLC codes.

\begin{figure}[tbp]
\centering \includegraphics[scale=0.65]{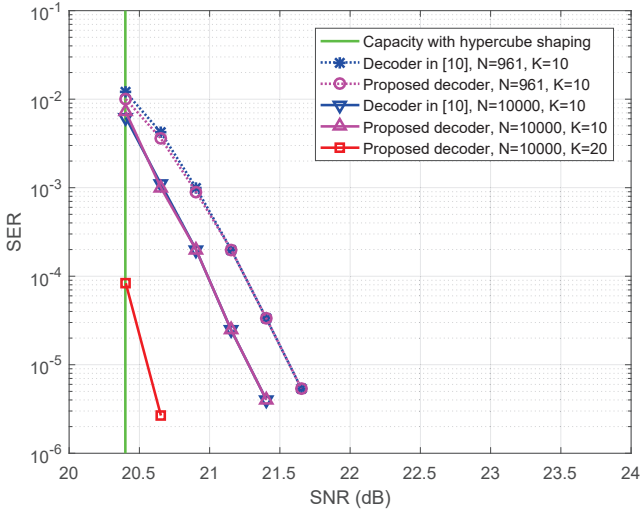} \vspace{-3
mm}
\caption{SER vs. SNR for the LDLC with $d=7$ and $R=2.8987$
bits/symbol} \vspace{-3 mm}
\label{fig:3}
\end{figure}

\section{Conclusions}

In this work, we have proved that all Gaussian-approximation based LDLC decoders have
the same convergence speed at high SNR. Inspired by this result, we proposed a fast
decoding algorithm which requires only $O(d)$ operations at each variable node. The new decoder
provides the same error correcting performance as the best known decoder,
but with much lower complexity. The proposed decoder enables the decode much longer LDLC which provide even better error correcting performance.


\section*{Appendix}

\subsection{Proof of Theorem 1}

Recalling the unnormalized variable message $\hat{f}_{j}(w)$ in (\ref%
{Gaussian_prod})
\begin{equation*}
\setlength{\abovedisplayskip}{3pt}
\setlength{\belowdisplayskip}{3pt}
\hat{f}_{j}(w)=\mathcal{N}(w;y_{k},\sigma ^{2})\prod_{i=1,i\neq
j}^{d}\sum_{k=-\infty }^{\infty }\mathcal{N}_{i}\left( w;m_{c,i}-\frac{k}{%
h_{i}},\sigma _{c,i}^{2}\right) .
\end{equation*}%
We assume that $\sigma ^{2}$ is small, i.e., $\mathcal{N}(w;y_{k},\sigma
^{2})$ is very narrow, such that for each periodic Gaussian, there will be
only one Gaussian component that contributes to the product. In this case,
the variance of normalized message equals to that of unnormalized one, since
they reduce to a single Gaussian.

In each iteration, we compute the variances of exchanged messages. At the $%
k^{\text{th}}$ iteration, let $C_{1}^{(k)}$ and $C_{\neq 1}^{(k)}$ be the
variances of check messages passed via the edges labeled by $\pm 1$ and $\pm
1/\sqrt{d}$, respectively. Similarly, let $V_{1}^{(k)}$ and $V_{\neq
1}^{(k)} $ be the variances of variable messages passed via the edges labeled
by $\pm 1 $ and $\pm 1/\sqrt{d}$, respectively.

Due to space limit, we ignore the computation for iterations 1 and 2. Full details will
be reported in the journal version.

%

\emph{Iteration 3}: For each check node, the variances are%
\begin{equation}
\setlength{\abovedisplayskip}{3pt}
\setlength{\belowdisplayskip}{3pt}
C_{\neq 1}^{(3)}=dV_{1}^{(2)}+(d-2)V_{\neq 1}^{(2)}<d\left(
V_{1}^{(1)}+V_{\neq 1}^{(1)}\right) <d\sigma ^{2},  \label{I3_1}
\end{equation}%
\begin{equation}
\setlength{\abovedisplayskip}{3pt}
\setlength{\belowdisplayskip}{3pt}
C_{1}^{(3)}=\frac{d-1}{d}V_{\neq 1}^{(2)}<V_{\neq 1}^{(2)}<\frac{1}{3.2}%
\sigma ^{2}.  \label{I3_2}
\end{equation}%
For each variable node, the variances are%
\begin{eqnarray}
V_{\neq 1}^{(3)} &=&\left( \frac{d-2}{C_{\neq 1}^{(3)}}+\frac{1}{C_{1}^{(3)}}%
+\frac{1}{\sigma ^{2}}\right) ^{-1}  \notag \\
&<&\left( \frac{1.6}{\sigma ^{2}}+\frac{1}{V_{\neq 1}^{(2)}}\right) ^{-1}<\frac{1}{4.8}\sigma ^{2},  \label{I3_3}
\end{eqnarray}%
\begin{equation}
\setlength{\abovedisplayskip}{3pt}
\setlength{\belowdisplayskip}{3pt}
V_{1}^{(3)}=\left( \frac{d-1}{C_{\neq 1}^{(3)}}+\frac{1}{\sigma ^{2}}\right)
^{-1}<\frac{2}{3}\sigma ^{2}.  \label{I3_4}
\end{equation}%
The bounds in (\ref{I3_1})-(\ref{I3_4}) can be extended to the rest
iterations:

\emph{Iteration }$k$: For each check node, the variances are%
\begin{equation}
\setlength{\abovedisplayskip}{3pt}
\setlength{\belowdisplayskip}{3pt}
C_{\neq 1}^{(k)}<d\sigma ^{2},
\end{equation}%
\begin{equation}
\setlength{\abovedisplayskip}{3pt}
\setlength{\belowdisplayskip}{3pt}
C_{1}^{(k)}<V_{\neq 1}^{(k-1)}.
\end{equation}%
For each variable node, the variances are%
\begin{equation}
\setlength{\abovedisplayskip}{3pt}
\setlength{\belowdisplayskip}{3pt}
V_{\neq 1}^{(k)}<\left( \frac{1.6}{\sigma ^{2}}+\frac{1}{V_{\neq 1}^{(k-1)}}%
\right) ^{-1}<\left( \frac{1.6}{\sigma ^{2}}+\frac{1.6\left( k-1\right) }{%
\sigma ^{2}}\right) ^{-1}=\frac{\sigma ^{2}}{1.6k}  \label{f1}
\end{equation}%
\begin{equation}
\setlength{\abovedisplayskip}{3pt}
\setlength{\belowdisplayskip}{3pt}
V_{1}^{(k)}<\frac{2}{3}\sigma ^{2}.  \label{f2}
\end{equation}%
Combining (\ref{f1}) and (\ref{f2}), we obtain (\ref{convegence_GA}). \QEDA




\bibliography{IEEEabrv,LIUBIB}

\begin{thebibliography}{10}
\providecommand{\url}[1]{#1}
\csname url@samestyle\endcsname
\providecommand{\newblock}{\relax}
\providecommand{\bibinfo}[2]{#2}
\providecommand{\BIBentrySTDinterwordspacing}{\spaceskip=0pt\relax}
\providecommand{\BIBentryALTinterwordstretchfactor}{4}
\providecommand{\BIBentryALTinterwordspacing}{\spaceskip=\fontdimen2\font plus
\BIBentryALTinterwordstretchfactor\fontdimen3\font minus
  \fontdimen4\font\relax}
\providecommand{\BIBforeignlanguage}[2]{{%
\expandafter\ifx\csname l@#1\endcsname\relax
\typeout{** WARNING: IEEEtran.bst: No hyphenation pattern has been}%
\typeout{** loaded for the language `#1'. Using the pattern for}%
\typeout{** the default language instead.}%
\else
\language=\csname l@#1\endcsname
\fi
#2}}
\providecommand{\BIBdecl}{\relax}
\BIBdecl

\bibitem{sommer08}
N.~Sommer, M.~Feder, and O.~Shalvi, ``Low-density lattice codes,'' \emph{{IEEE}
  Trans. Inf. Theory}, vol.~54, no.~4, pp. 1561--1585, Apr. 2008.

\bibitem{Chen15X}
B.~Chen, D.~N.~K. Jayakody, and M.~F. Flanagan, ``Low-density lattice coded
  relaying with joint iterative decoding,'' \emph{{IEEE} Trans. Commun.},
  vol.~63, no.~12, pp. 4824--4837, Dec. 2015.

\bibitem{Chen16}
------, ``Distributed low-density lattice codes,'' \emph{IEEE Communications
  Letters}, vol.~20, no.~1, pp. 77--80, Jan. 2016.

\bibitem{Ferdinand15}
N.~S. Ferdinand, M.~Nokleby, and B.~Aazhang, ``Low-density lattice codes for
  full-duplex relay channels,'' \emph{{IEEE} Trans. Wireless Commun.}, vol.~14,
  no.~4, pp. 2309--2321, Apr. 2015.

\bibitem{Hooshmand16}
R.~Hooshmand and M.~R. Aref, ``Efficient secure channel coding scheme based on
  low-density lattice codes,'' \emph{IET Communications}, vol.~10, no.~11,
  2016.

\bibitem{Khodaiemehr17}
H.~Khodaiemehr, D.~Kiani, and M.~R. Sadeghi, ``{LDPC} lattice codes for
  full-duplex relay channels,'' \emph{{IEEE} Trans. Commun.}, vol.~65, no.~2,
  pp. 536--548, Feb. 2017.

\bibitem{Sadeghi06}
M.~R. Sadeghi, A.~H. Banihashemi, and D.~Panario, ``Low-density parity-check
  lattices: Construction and decoding analysis,'' \emph{{IEEE} Trans. Inf.
  Theory}, vol.~52, no.~10, pp. 4481--4495, Oct. 2006.

\bibitem{Kurkoski08}
B.~Kurkoski and J.~Dauwels, ``Message-passing decoding of lattices using
  {Gaussian} mixtures,'' in \emph{Proc. IEEE Int. Symp. Inform. Theory
  (ISIT'08)}, 2008, pp. 2489--2493.

\bibitem{Yona09}
Y.~Yona and M.~Feder, ``Efficient parametric decoder of low density lattice
  codes,'' in \emph{Proc. IEEE Int. Symp. Inform. Theory (ISIT'09)}, Jun. 2009,
  pp. 744--748.

\bibitem{Hernandez16}
R.~A.~P. Hernandez and B.~M. Kurkoski, ``The three/two {Gaussian} parametric
  {LDLC} lattice decoding algorithm and its analysis,'' \emph{{IEEE} Trans.
  Commun.}, vol.~64, no.~9, pp. 3624--3633, Sep. 2016.

\bibitem{Sommer09shaping}
N.~Sommer, M.~Feder, and O.~Shalvi, ``Shaping methods for low-density lattice
  codes,'' in \emph{Proc. IEEE Information Theory Workshop (ITW'09)}, Oct.
  2009, pp. 238--242.

\bibitem{Ferdinand14shaping}
N.~S. Ferdinand, B.~M. Kurkoski, B.~Aazhang, and M.~Latva-aho, ``Shaping
  low-density lattice codes using {Voronoi} integers,'' in \emph{Proc. IEEE
  Information Theory Workshop (ITW'14)}, Nov. 2014, pp. 127--131.

\bibitem{Zhou17shaping}
F.~Zhou and B.~M. Kurkoski, ``Shaping {LDLC} lattices using convolutional code
  lattices,'' \emph{IEEE Communications Letters}, vol.~21, no.~4, pp. 730--733,
  2017.

\bibitem{LDLC961}
http://www.cs.cmu.edu/ bickson/gabp/.

\bibitem{Kurkoski10r}
B.~Kurkoski and J.~Dauwels, ``Reduced-memory decoding of low-density lattice
  codes,'' \emph{IEEE Communications Letters}, vol.~14, no.~7, pp. 659--661,
  Jul. 2010.

\end{thebibliography}
\bibliographystyle{IEEEtran}

\end{document}